\documentclass[a4paper]{article}
\usepackage[margin=2.5cm]{geometry}

\usepackage{fontenc}[T1]
\usepackage{alltt}
\usepackage{graphicx}

\usepackage[english]{babel}
\usepackage{multirow}
\graphicspath{{figures/}}
\usepackage{hyperref}
\usepackage{amsthm}

\usepackage{multirow}
\usepackage{booktabs}
\usepackage{amsmath}
\usepackage{amsfonts}
\usepackage{amssymb}
\usepackage{hyperref}
\usepackage{mdwlist}
\usepackage[ruled,vlined,linesnumbered]{algorithm2e}
\usepackage{color}
\usepackage{pgf}
\usepackage{longtable}
\usepackage{pdflscape}
\usepackage[round]{natbib} 
\usepackage{mathrsfs}
\usepackage{subcaption}
\usepackage{graphicx}
\usepackage{algorithmic}
\usepackage{rotating}

\newtheorem{theorem}{Theorem}
\newtheorem{lemma}[theorem]{Lemma}
\newtheorem{definition}[theorem]{Definition}
\newtheorem{problem}[theorem]{Problem}

\begin{document}

\title{Column Generation Algorithms for Nonparametric Analysis of Random Utility Models}
\author{\sc Bart Smeulders}
\date{\today}

\maketitle
\begin{abstract}
  \cite{kitamura2012} develop a nonparametric test for linear inequality constraints, when these are are represented as vertices of a polyhedron instead of its faces. They implement this test for an application to nonparametric tests of Random Utility Models. As they note in their paper, testing such models is computationally challenging. In this paper, we develop and implement more efficient algorithms, based on column generation, to carry out the test. These improved algorithms allow us to tackle larger datasets.
\end{abstract}

\section{Introduction} \label{Introduction}
In a recent paper \emph{"Nonparameteric Analysis of Random Utility Models"}, \cite{kitamura2012} (henceforth KS) develop a test for nonparametric testing of Random Utility Models (RUM). They test the hypothesis that a repeated cross-section of demand data might have been generated by a population of rational consumers. A practical implementation of this test leads to a challenging computational problem. The linear program proposed by \cite{mcfadden1990} is extended with a quadratic objective function, minimizing the Euclidean distance. In effect, the minimum distance between a vector and a cone in a high-dimensional space is calculated. This quadratic program must be solved to compute the test statistic and for each bootstrap replication for the simulation of the critical value. This quadratic program is large, with one variable for each rational choice type. The number of such types rises exponentially with the number of choice situations, and even identifying all types is time consuming. In fact, this is the main limiting factor in KS's implementation. \\

The computational problems handled in this paper are similar to those encountered in the study of random utility models in binary choice settings where rational choice types are represented by strict linear orders over the choice alternatives \citep{block1960}. Specifically, \cite{cavagnaro2014} calculates Bayes factors, a measure for model comparison, for the random utility model. Calculating these factors requires numerous checks to test whether a vector lies inside a polytope. In effect, they test whether the Euclidean distance is equal to zero. For small datasets, inequalities are known describing the polytope, but these descriptions grow quickly with the number of choice alternatives and no full description exists for eight or more choice alternatives \cite{Marti2011ch}. \cite{smeulders2018COR} propose algorithms capable of handling larger datasets, by making use of an adaptation of the linear program of \cite{mcfadden1990}. As in the current paper, the large number of rational choice types, and thus variables in the linear program, makes solving the complete model inefficient. By transforming the problem into an optimization problem, through which the point in the polytope minimizing the Manhattan-distance to the vector is found, a column generation approach can be applied. Informally, a column generation approach makes use of the fact that optimal solutions to optimization problems with large numbers of variables, but relatively few constraints, have optimal solutions that only use a relatively small number of variables. A column generation approach starts with a limited number of variables, and identifies new ones as needed through a separate optimization problem, and thus circumvents the problem of having to identify all rational choice types.\\

In this paper, we will use some of the same ideas. We propose a column generation approach for the Euclidean distance calculation.  Furthermore, we note that the tightening procedure in KS is incompatible with column generation, as it requires knowledge of all rational choice types. To overcome this obstacle, we show that a slight modification to the procedure is possible to remove this requirement. To show the practical benefit of the column generation algorithm, we re-analyze the empirical application of \cite{deb2017revealed}(henceforth DKSQ). To increase computation speed, we develop heuristic algorithms to generate interesting rational choice types for this setting. We show the computational improvements make it possible to study much longer budget sequences.\\

The paper unfolds as follows. In section \ref{Sec:RUM}, we briefly describe the Random Utility Model (RUM). Section \ref{Sec:KitSto} lays out the test described in KS, with a focus on the computational problem of calculating the test statistic. Next, in section \ref{Sec:Column Generation} we describe a column generation algorithm, which we use to more efficiently compute the test statistic. Section \ref{Sec:Tight} describes how to handle the tightening of the cone in a manner that is consistent with column generation. An empirical application is contained in Section \ref{Sec:Application}. We first describe the particular model tested in DKSQ. Next, we show how to implement the general column algorithm for this setting.  Finally, we show the computational benefits of our approach.

\subsection{Random Utility Models} \label{Sec:RUM}

We briefly describe the RUM in a discrete choice setting. KS handle a continuous choice setting, as does the application by DKSQ, but both rely on discretization to make testing possible. Consider the set $\mathcal{X}$ of all discrete choice options, we denote individual choice options by $x_i$. Let $u: \mathcal{X} \rightarrow \mathbb{R}$ denote a utility function. For simplicity, we assume $u(x_i) \ne u(x_j)$ for all $i, j \in \mathcal{X}, i \ne j$. A choice situation $t$ is characterized by a subset of the discrete choice options, denoted $\mathcal{X}_t \subseteq \mathcal{X}$.  A rational actor with a utility function $u$ then picks choice option $x$ satisfying
\begin{align*}
  x = \arg \max_{x_j \in \mathcal{X}_t} u(x_j).
\end{align*}
We furthermore denote the choice option $x$ chosen in situation $t$ by $x(t)$.\\

Given the discrete nature of the choice options, there is a finite number of ways an actor can choose over all situations. We characterize a choice type, indexed by $r$, by the choices she makes in each choice situation. Specifically, we encode a choice type $r$ as $\mathbf{a}_r = (a_{r,1,1}, \ldots, a_{r,T,|\mathcal{X}|})$, with $a_{r,t,i} = 1$ if choice option $x_{i}$ is chosen in situation $t$ by type $r$ and $a_{r,t,i} = 0$ otherwise. The set of rational choice types $\mathcal{R}$ is the set of all types $r$ for which there exists some utility function $u_r$ such that
\begin{align*}
  a_{r,t,i} = 1 \text{ if and only if } x_i = \arg \max_{x_j \in \mathcal{X}_t} u_r(x_j)
\end{align*}

Let $P_\mathcal{R}$ be a probability distribution over all rational choice types, and let $p_r$ be the probability of a given choice type. We define the sets $\mathcal{R}_{t,i}$ as the subsets of $\mathcal{R}$ such that $r \in \mathcal{R}_{t,i}$ if and only if $a_{r,t,i} = 1$, i.e. $\mathcal{R}_{t,i}$ is the set of rational choice types which choose $x_i$ in choice situation $t$. Now suppose we observe choices for the given choice situations, with $\pi_{t,i}$ the rate at which option $i$ is chosen in situation $t$.

\begin{definition} \label{def:Stoch Rat}
  The observed choices $\pi$ are stochastically rationalizable if and only if there exists a distribution $P_\mathcal{R}$ over choice types, such that
  \begin{align}
    \sum_{r \in \mathcal{R}_{t,i}} p_r = \pi_{t,i} & & \forall t = 1,\ldots,T,  x_i \in \mathcal{X}. \label{RUM}
  \end{align}
\end{definition}

Before continuing, we would like to highlight the geometric interpretation of Definition \ref{def:Stoch Rat}. Consider a space, with the number of dimensions equal to the sum of the number of choice options available in each choice situation, over all choice situations. We can interpret $\pi$ as a vector in this space, with $\pi_{t,i}$ the coordinate in the dimension associated with $t$ and $i$. Likewise, the vectors $\mathbf{a}_r$ provide coordinates in each dimension for each rational choice pattern. These vectors describe a convex cone, which we denote by $ \mathcal{C}$
\begin{align} \label{V-Representation}
\mathcal{C} = \{\mathbf{c}| \mathbf{c} = \sum_{r \in \mathcal{R}} \lambda_r \mathbf{a}_r, \lambda_r \geq 0 ,~\forall r \in \mathcal{R}\}.
\end{align}
This representation of the cone is called the $V$-representation, as it is based on the vectors defining the cone. Choice probabilities are rationalizable if and only if $\pi \in \mathcal{C}$. \\

Equivalently, there exists a $H$-representation of the cone, based on hyperplanes. Consider the set of hyperplanes $\mathcal{H} = \mathcal{H}^\leq \cup \mathcal{H}^=$. Each $h \in \mathcal{H}^\leq$ divides the space into half-spaces, one of which is the feasible region (which includes the hyperplane), the other infeasible. For each $h \in \mathcal{H}^=$, only the hyperplane itself is the feasible region. The union of these feasible regions is the cone $\mathcal{C}$. Specifically, consider a set of hyperplanes $\mathcal{H}$. For each $h \in \mathcal{H}$, there exist parameters $b_{h,t,i}$ with $\sum_{t}^{T} \sum_{i}^{I_t} b_{h,t,i} c_{t,i} = 0$ describing the hyperplane. Then
\begin{align}\label{H-Represenation}
  \mathcal{C} = \left\{\mathbf{c} \left\vert
  \begin{array}{l}
  \sum_{t}^{T} \sum_{i}^{I_t} \mathbf{b}_{h}~ \mathbf{c} \leq 0, \forall h \in \mathcal{H}^\leq \\
  \sum_{t}^{T} \sum_{i}^{I_t} \mathbf{b}_{h}~ \mathbf{c} = 0, \forall h \in \mathcal{H}^=
  \end{array}
  \right.
  \right\}.
\end{align}

\subsection{Testing the Random Utility Model} \label{Sec:KitSto}
In this section, we briefly lay out the test described by KS, focussing on the computational problems that arise when implementing the test. We refer to KS for a more thorough explanation of the test.\\

\subsubsection{Test Statistic}
Let $\hat{\pi}$ be an estimator for $\pi$. KS propose to use the Euclidean distance between the vector $\hat{\pi}$ and the cone $C$ as the test statistic $J_N$. Formally, $J_N$ is the optimum objective value to the problem (\ref{QP Original Start})-(\ref{QP Original End}). In this problem, $p_r$ denotes the probability associated with type $r$. $s_{t,i}$ denotes the distance, in the dimension associated with patch $i$ on budget $t$, between the linear combination of the types and the estimated choice probabilities $\hat{\pi}_{t,i}$. $N$ is the number of observations over all time periods. Note that if, and only if, $J_N = 0$, $\hat{\pi}$ is stochastically rationalizable in the sense of Definition \ref{def:Stoch Rat}.

\begin{align}\label{QP Original Start}
  \text{Minimize} &  & J_N = N \sum_{t = 1}^{T} \sum_{i = 1}^{I_t} s^2_{t,i} \\
  \text{Subject to} & & \nonumber \\
  & & \sum_{r \in\mathcal{R}_{t,i}} p_r + s_{t,i} & = \hat{\pi}_{t,i} & & \forall x_{t,i} \in \mathcal{X} \\
  & & p_r & \geq 0 & & \forall r \in\mathcal{R} \label{QP Original End}
\end{align}

The projection of $\hat{\pi}$ onto $C$ is denoted by $\hat{\eta} = \sum_{r \in \mathcal{R}} p_r \mathbf{a}_r$.

\subsubsection{Critical Value} \label{Sec:CritVal}
The critical value is computed through a bootstrap procedure, which relies on a tuning paramater $\tau_N$. Given $R$ bootstrap replications with sample frequencies $\hat{\pi}^{*(r)}$ for $r = 1, \ldots, R$, the critical value for $J_N$ is computed as follows.
\begin{enumerate}
\item Obtain the $\tau_N$-tightened estimator $\hat{\eta}^{\tau_N}$, with $\hat{\eta}^{\tau_N} = \sum_{r \in\mathcal{R}} p_r \mathbf{a}_r$, solving (\ref{QP Tight Start})-(\ref{QP Tight End}).
\begin{align}\label{QP Tight Start}
  \text{Minimize} &  & J_N = N \sum_{t = 1}^{T} \sum_{i = 1}^{I_t} s^2_{t,i} \\
  \text{Subject to} & & \nonumber \\
  & & \sum_{r \in\mathcal{R}_{t,i}} p_r + s_{t,i} & = \hat{\pi}_{t,i} & & \forall x_{t,i} \in \mathcal{X} \\
  & & p_r & \geq \tau_N / |\mathcal{R}| & & \forall r \in\mathcal{R} \label{QP Tight End}
\end{align}
\item Define the $\tau_N$-tightened recentered bootstrap estimators.
\begin{align}
\hat{\pi}_{\tau_N}^{*(r)} = \hat{\pi}^{*(r)} - \hat{\pi} + \hat{\eta}_{\tau_N}.
\end{align}
\item The bootstrap test statistics $J_N^{*(r)}(\tau_N)$ are the solutions to (\ref{QP Tight Start})-(\ref{QP Tight End}), using $\hat{\pi}^{*(r)}$ for the right-hand sides of the inequalities.
\item Use the empirical distribution of $J_N^{*(r)}(\tau_N)$, $m = 1, \ldots, M$ to obtain the critical value for $J_N$.
\end{enumerate}

\subsection{Computational Difficulties}
To compute the tests statistic $J_N$ and to obtain a critical value for it, the problem (\ref{QP Original Start})-(\ref{QP Original End}) must be solved once, and the problem (\ref{QP Tight Start})-(\ref{QP Tight End}) solved $1 + M$ times (once to obtain the $\tau_N$-tightened estimator, and then once for each bootstrap replication). As mentioned by KS, solving these problems is computationally challenging. The straightforward approach implemented by KS requires that each rational choice type $r \in \mathcal{R}$ is first identified (though this must be done only once), and then a large quadratic program must be solved. The number of rational choice types can however rise exponentially with the number of periods considered. This makes the approach by KS computationally costly for moderately sized instances, and makes larger instances impossible. Table \ref{table:rationalchoicetypes} shows the approximate number of rational choice types for different size instances in the application of DKSQ.\footnote{The number of total choice types is calculated exactly, random sampling is used to estimate the ratio of rational choice types to total choice types.}

\begin{table}[htbp]
  \centering
    \begin{tabular}{r|ll|ll|ll|}
          & \multicolumn{2}{c|}{3 Goods} & \multicolumn{2}{c|}{4 Goods} & \multicolumn{2}{c|}{5 Goods} \\
          & \multicolumn{1}{c}{Min} & \multicolumn{1}{c|}{Max} & \multicolumn{1}{c}{Min} & \multicolumn{1}{c|}{Max} & \multicolumn{1}{c}{Min} & \multicolumn{1}{c|}{Max} \\ \hline
    6 Periods  & $3.00*10^{1}$ & $5.44*10^{4}$ & $1.38*10^{3}$ & $4.30*10^{5}$ & $1.38*10^{3}$ & $4.30*10^{5}$ \\
    10 Periods & $5.14*10^{3}$ & $3.35*10^{8}$ & $1.52*10^{8}$ & $1.03*10^{13}$ & $6.76*10^{12}$ & $6.93*10^{16}$ \\
    15 Periods & $2.12*10^{9}$ & $3.87*10^{15}$ & $6.76*10^{17}$ & $7.05*10^{21}$ & $1.05*10^{19}$ & $4.07*10^{22}$ \\
    20 Periods & $2.01*10^{18}$ & $2.98*10^{22}$ &       &       &       &  \\
    \end{tabular}
  \caption{Approximate maximum and minimum number of rational choice types in the DKSQ application.}
  \label{table:rationalchoicetypes}
\end{table}

In the following sections, we describe how these problems can be solved without requiring the identification of all choice types, by making use of a column generation algorithm. In section \ref{Sec:Column Generation}, we handle the problem (\ref{QP Original Start})-(\ref{QP Original End}). Problem (\ref{QP Tight Start})-(\ref{QP Tight End}) is subtly different, requiring a strictly positive lower bound on the variables $p_r$, associated with the choice types. Solving this problem without identifying all choice types is thus not possible. However, in section \ref{Sec:Tight} we propose minor changes to the KS-procedure for obtaining the critical value, so that these strictly positive lower bounds are no longer necessary.

\section{Euclidean Projection through Column Generation} \label{Sec:Column Generation}

We will tackle this problem by making use of a column generation algorithm. Instead of solving (\ref{QP Original Start})-(\ref{QP Original End}) directly, we will start with a limited version of this problem, using only a small set of its variables. We will call this problem the {\it restricted master}. Given a solution to this problem, we find a hyperplane, separating the vector $\pi$ from the restricted polytope. In a second problem, called the {\it pricing problem} we check whether there exists any point of the full polytope on the side of $\pi$ of the separating hyperplane. If no such point exists, we show the solution to the restricted master is also a solution to (\ref{QP Original Start})-(\ref{QP Original End}). If such a point does exist, we add the corresponding variable to the restricted master and (re-)solve this problem. Such an approach to computing the distance between a point and a polytope, by iteratively taking into account additional vertices of a polytope, is originally described by \cite{wolfe1976}. Wolfe does make use of an exhaustive list of vertices of the polytope, which is impractical given the large number of vertices in our application. \cite{cadoux2010} extends this to a setting without an exhaustive list of vertices. \\

Let us look at the proposed algorithm step-by-step. First, we solve problem (\ref{QP Original Start})-(\ref{QP Original End}) with a restricted set $\bar{\mathcal{R}}$ of $k$ choice patterns. The {\it bar} notation signifies that the variables, sets or solution belongs to a restricted master problem. From the optimal solution to this restricted problem, $\bar{\mathbf{p}}^* = (\bar{p}^*_1, \ldots, \bar{p}^*_k)$ and $\bar{\mathbf{s}}^* = (\bar{s}^*_{1,1}, \ldots, \bar{s}^*_{T,I_T})$, we can construct the Euclidean projection of $\pi$ on the restricted cone $\bar {\mathcal{C}}$. This projection is the vector $\bar{\mathbf{v}}^* = (\bar{v}^*_{1,1}, \ldots, \bar{v}^*_{T,I_T})$ with $\bar{v}^*_{t,i} = \sum_{r \in \bar{\mathcal{R}}_{t,i}} \bar{p}^*_r$. Now consider the characterization of a Euclidean projection on a convex set (in this case the cone $\mathcal{C}$)

\begin{theorem}
$\mathbf{v}^*$ is the Euclidean projection of $\hat{\pi}$ on $\mathcal{C}$ if and only if  $(\hat{\pi} - \mathbf{v}^*) \cdot (\mathbf{v} - \mathbf{v}^*) \leq 0 $, for all $\mathbf{v} \in \mathcal{C}$.
\end{theorem}

Since $\mathcal{C}$ is the set of linear combinations of vectors $\mathbf{a}_r, r \in \mathcal{R}$, we can also state the following result:

\begin{theorem}
$\mathbf{v}^*$ is the Euclidean projection of $\pi$ on $\mathcal{C}$ if and only if  $(\hat{\pi}- \mathbf{v}^*) \cdot (\mathbf{a}_r - \mathbf{v}^*) \leq 0 $, for all $r \in \mathcal{R}$.
\end{theorem}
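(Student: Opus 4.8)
The plan is to derive the second theorem from the first by exploiting the specific structure of $\mathcal{C}$ as a cone generated by the vectors $\mathbf{a}_r$. The first theorem gives the standard variational characterization of a Euclidean projection onto a convex set: $\mathbf{v}^*$ is the projection of $\hat{\pi}$ onto $\mathcal{C}$ if and only if $(\hat{\pi} - \mathbf{v}^*) \cdot (\mathbf{v} - \mathbf{v}^*) \leq 0$ for all $\mathbf{v} \in \mathcal{C}$. Since the second theorem restricts the test from all $\mathbf{v} \in \mathcal{C}$ to just the generators $\mathbf{a}_r$, $r \in \mathcal{R}$, I would prove the two conditions equivalent.

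First I would establish the easy direction: the condition in the second theorem is clearly \emph{necessary}, since each $\mathbf{a}_r \in \mathcal{C}$ (take $\lambda_r = 1$ and all other coefficients zero in the $V$-representation \eqref{V-Representation}), so the inequality $(\hat{\pi} - \mathbf{v}^*) \cdot (\mathbf{a}_r - \mathbf{v}^*) \leq 0$ is just a special case of the first theorem's condition. The substance is in the \emph{sufficiency} direction: assuming the inequality holds for every generator $\mathbf{a}_r$, I must show it holds for an arbitrary $\mathbf{v} \in \mathcal{C}$, and then invoke the first theorem. I would take an arbitrary $\mathbf{v} = \sum_{r \in \mathcal{R}} \lambda_r \mathbf{a}_r$ with $\lambda_r \geq 0$ and expand the inner product, writing
\begin{align*}
(\hat{\pi} - \mathbf{v}^*) \cdot (\mathbf{v} - \mathbf{v}^*) = \sum_{r \in \mathcal{R}} \lambda_r (\hat{\pi} - \mathbf{v}^*) \cdot (\mathbf{a}_r - \mathbf{v}^*) + \left(1 - \sum_{r \in \mathcal{R}} \lambda_r\right)(\hat{\pi} - \mathbf{v}^*) \cdot (-\mathbf{v}^*).
\end{align*}
The first sum is a nonnegative combination of terms each assumed $\leq 0$, hence $\leq 0$; the issue is the second term, whose sign depends on $(\hat{\pi} - \mathbf{v}^*) \cdot \mathbf{v}^*$ and on whether $\sum_r \lambda_r$ exceeds $1$.

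The main obstacle, and the reason the cone structure matters, is controlling that leftover term. I expect the key auxiliary fact to be $(\hat{\pi} - \mathbf{v}^*) \cdot \mathbf{v}^* = 0$, i.e.\ that the residual is orthogonal to the projection itself. This follows from $\mathbf{v}^*$ being a projection onto a \emph{cone}: since $\mathbf{0} \in \mathcal{C}$ and $2\mathbf{v}^* \in \mathcal{C}$ (scaling a cone element by any nonnegative factor stays in the cone), applying the first theorem's inequality with $\mathbf{v} = \mathbf{0}$ and with $\mathbf{v} = 2\mathbf{v}^*$ yields $(\hat{\pi} - \mathbf{v}^*) \cdot \mathbf{v}^* \geq 0$ and $(\hat{\pi} - \mathbf{v}^*) \cdot \mathbf{v}^* \leq 0$ respectively, forcing equality to zero. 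Equivalently, I can obtain this directly from the generators by testing $\mathbf{a}_r$ and $2\mathbf{v}^*$ — though the cleanest route is to note $\mathbf{v}^* = \sum_r \bar p_r^* \mathbf{a}_r$ is itself a nonnegative combination of generators, so $(\hat{\pi} - \mathbf{v}^*)\cdot \mathbf{v}^* = \sum_r \bar p_r^* (\hat{\pi}-\mathbf{v}^*)\cdot \mathbf{a}_r$, and combining the assumed per-generator inequalities with this scaling argument pins the value to zero. Once orthogonality is in hand, the leftover term vanishes regardless of $\sum_r \lambda_r$, the full inner product is $\leq 0$ for every $\mathbf{v} \in \mathcal{C}$, and the first theorem closes the argument. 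I would present the orthogonality step carefully, as it is the one place where the conic (rather than merely convex) geometry is essential.
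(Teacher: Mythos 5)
Your decomposition is exactly right, and it in fact exposes a flaw in the paper's own argument: the paper rewrites $(\hat{\pi}-\mathbf{v}^*)\cdot\bigl(\sum_{r\in\mathcal{R}}\lambda_r\mathbf{a}_r-\mathbf{v}^*\bigr)$ as $\sum_{r\in\mathcal{R}}\lambda_r(\hat{\pi}-\mathbf{v}^*)\cdot(\mathbf{a}_r-\mathbf{v}^*)$, an identity that holds only when $\sum_r\lambda_r=1$ or when $(\hat{\pi}-\mathbf{v}^*)\cdot\mathbf{v}^*=0$; the two expressions differ by precisely your leftover term $\bigl(\sum_r\lambda_r-1\bigr)(\hat{\pi}-\mathbf{v}^*)\cdot\mathbf{v}^*$, which the paper silently drops. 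So your insistence on establishing orthogonality of the residual to $\mathbf{v}^*$ is the correct repair, not a detour.

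However, your proposed derivations of that orthogonality do not go through, and this is a genuine gap. Applying the first theorem's inequality with $\mathbf{v}=\mathbf{0}$ and $\mathbf{v}=2\mathbf{v}^*$ is circular in the sufficiency direction: that inequality is available only once you know $\mathbf{v}^*$ is the projection onto $\mathcal{C}$, which is exactly what you are trying to prove. Your ``cleanest route'' also fails: from $\mathbf{v}^*=\sum_r\bar{p}^*_r\mathbf{a}_r$ and the per-generator inequalities you obtain only $\bigl(1-\sum_r\bar{p}^*_r\bigr)(\hat{\pi}-\mathbf{v}^*)\cdot\mathbf{v}^*\leq 0$, which does not pin the value to zero. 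Indeed it cannot: take a single generator $\mathbf{a}_1=(1,0)$, $\hat{\pi}=(2,1)$, and $\mathbf{v}^*=\mathbf{a}_1$; then $(\hat{\pi}-\mathbf{v}^*)\cdot(\mathbf{a}_1-\mathbf{v}^*)=0\leq 0$ for every generator, yet $(\hat{\pi}-\mathbf{v}^*)\cdot\mathbf{v}^*=1\neq 0$ and the true projection is $(2,0)\neq\mathbf{v}^*$. This shows the orthogonality is an \emph{additional hypothesis}, not a consequence of the per-generator inequalities (and that the theorem, read literally for an arbitrary $\mathbf{v}^*\in\mathcal{C}$, actually needs it). The non-circular source of the hypothesis is the algorithmic context in which the theorem is used: $\bar{\mathbf{v}}^*$ is computed as the Euclidean projection of $\hat{\pi}$ onto the restricted cone $\bar{\mathcal{C}}$, so the first theorem applied to $\bar{\mathcal{C}}$ (with test points $\mathbf{0}$ and $2\bar{\mathbf{v}}^*$, both in $\bar{\mathcal{C}}$) legitimately yields $(\hat{\pi}-\bar{\mathbf{v}}^*)\cdot\bar{\mathbf{v}}^*=0$. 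With that hypothesis made explicit, your decomposition closes the argument and gives a proof that is more careful than the one in the paper.
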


\begin{proof}
  Suppose that $(\hat{\pi}- \mathbf{v}^*) \cdot (\mathbf{a}_r - \mathbf{v}^*) \leq 0 $, for all $r \in \mathcal{R}$, we now argue that $(\hat{\pi}- \mathbf{v}^*) \cdot (\mathbf{v} - \mathbf{v}^*) \leq 0 $, for all $\mathbf{v} \in \mathcal{C}$. For each $\mathbf{v} \in \mathcal{C}$, there exist non-negative numbers $\lambda_r$ such that $\mathbf{v} = \sum_{r \in \mathcal{R}} \lambda_r \mathbf{a}_r$. Thus, $(\hat{\pi}- \mathbf{v}^*) \cdot (\mathbf{v} - \mathbf{v}^*)$ can be written as $(\hat{\pi}- \mathbf{v}^*) \cdot (\sum_{r \in \mathcal{R}} \lambda_r \mathbf{a}_r - \mathbf{v}^*)$ or as $\sum_{r \in \mathcal{R}} \lambda_c (\hat{\pi}- \mathbf{v}^*) \cdot ( \mathbf{a}_r - \mathbf{v}^*)$. Since $(\hat{\pi}- \mathbf{v}^*) \cdot (\mathbf{a}_r - \mathbf{v}^*) \leq 0 $, for all $r \in \mathcal{R}$, we also have $\sum_{r \in \mathcal{R}} \lambda_c (\hat{\pi}- \mathbf{v}^*) \cdot ( \mathbf{a}_r - \mathbf{v}^*) < 0$.
\end{proof}

Note that $(\hat{\pi}- \mathbf{v}^*) = \mathbf{s}^*$, thus we can rewrite $(\hat{\pi}- \mathbf{v}^*) \cdot (\mathbf{a}_r - \mathbf{v}^*) \leq 0 $ as $\mathbf{s}^* \mathbf{a}_r \leq \mathbf{s}^* \mathbf{v}^*$. Given this result, we can check whether $\bar{\mathbf{v}}^*$ is the Euclidean projection of $\pi$ on $\mathcal{C}$ by solving the following problem:
\begin{problem}
Does there exist a choice pattern $r \in \mathcal{R}$, such that $\bar{\mathbf{s}}^* \mathbf{a}_r \geq \bar{\mathbf{s}}^*\bar{\mathbf{v}}^*$ ?
\end{problem}

To answer this question, we solve a different optimization problem, usually referred to as the {\it pricing} problem.
\begin{align}
\arg \max_{r \in \mathcal{R}} \bar{s}^* \mathbf{a}_r. \label{pricing problem}
\end{align}

It is clear that if we find an optimal solution to (\ref{pricing problem}), we can easily check whether it satisfies the threshold value $(\bar{\mathbf{s}}^* \bar{\mathbf{v}}^*)$. If the threshold is met, the choice type is added to the set of choice patterns considered in the restricted problem, which is then re-solved. Otherwise, the solution $\bar{\mathbf{p}}^*, \bar{\mathbf{s}}^*$ to the restricted problem is also the optimal solution to the problem considering the full set of choice patterns. \\

Although an optimal solution to (\ref{pricing problem}) is preferable, it is important to note that any $r \in \mathcal{R}$ with $\bar{\mathbf{s}} \mathbf{a}_r \geq \bar{\mathbf{s}}^* \bar{\mathbf{v}}^*$ is sufficient to continue with the column generation. To speed up computation, it can thus be more interesting to quickly find any type $r \in \mathcal{R}$ meeting the threshold than to spend a longer time finding the solution to (\ref{pricing problem}). Algorithm \ref{QP Algorithm} summarizes the column generation algorithm.

\begin{algorithm} \caption{Quadratic Program Column Generation Algorithm}
\label{QP Algorithm}
\begin{algorithmic}[1]
\STATE Solve Initial Restricted Master Problem, optimal solution $\bar{p}^*, \bar{s}^*, \bar{v}^*$. \label{master}
\WHILE {there exists $r \in \mathcal{R}$ with $\bar{s}^* a_r \geq \bar{s}^*\bar{v}^*$}
\STATE Find a choice pattern $r \in \mathcal{R}$ with $\bar{s}^* a_r \geq \bar{s}^*\bar{v}^*$.
\STATE Set $\bar{\mathcal{R}} := \bar{\mathcal{R}} \cup d$.
\STATE Re-Solve Restricted Master Problem, optimal solution $\bar{p}^*, \bar{s}^*, \bar{v}^*$.
\ENDWHILE
\STATE Restricted Master Solution $\bar{p}^*, \bar{s}^*, \bar{v}^*$ is the optimal solution $y^*, s^*, v^*$ to the Complete Master Problem.
\end{algorithmic}
\end{algorithm}

This approach allows us to solve (\ref{QP Original Start})-(\ref{QP Original End}) with only a fraction of the rational choice types identified, as we will show in the application.
\section{Solving Tightened Problems} \label{Sec:Tight}
In problems of the form (\ref{QP Tight Start})-(\ref{QP Tight End}), there is the additional complication that there is a strictly positive lower bound on $p_r$ for all $r \in \mathcal{R}$. This is incompatible with the column generation algorithm described in the previous section, which only uses a subset of these variables. We work around this problem in two steps. First, we show that for every problem of the form (\ref{QP Tight Start})-(\ref{QP Tight End}), with strictly positive lower bounds, there exists an equivalent problem with zero lower bounds which can be solved using the column generation algorithm. If all rational choice types have a strictly positie lower bound, finding this equivalent problem still requires knowledge of all rational choice types. However, we also show that the tightening can be achieved by setting strictly positive lower bounds for only a subset of the rational choice types.

\begin{lemma}
The problem
\begin{align}\label{QP Tight Eq Start}
  \text{Minimize} &  & J_N = N \sum_{t = 1}^{T} \sum_{x_i \in \mathcal{X}} s^2_{t,i}  \\
  \text{Subject to} & & \nonumber \\
  & & \sum_{r \in\mathcal{R}_{t,i}} p_r + s_{t,i} & = \hat{\pi}_{t,i} - \sum_{r \in\mathcal{R}_{t,i}} \tau_N / |\mathcal{R}| & & \forall x_{t,i} \in \mathcal{X} \\
  & & p_r & \geq 0&  & \forall r \in\mathcal{R} \label{QP Tight Eq End}
\end{align}
is equivalent to problem (\ref{QP Tight Start})-(\ref{QP Tight End}).
\end{lemma}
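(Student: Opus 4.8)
The plan is to exhibit an explicit bijection between feasible solutions of the two problems that preserves the objective value, so that optimal solutions correspond and the optimal objective values coincide. The natural substitution is a simple shift of the type-probability variables. Given a feasible solution $(p_r, s_{t,i})$ of the tightened problem \eqref{QP Tight Start}--\eqref{QP Tight End}, I would define $q_r = p_r - \tau_N/|\mathcal{R}|$ for every $r \in \mathcal{R}$, and leave the slack variables $s_{t,i}$ unchanged.

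First I would verify that this map sends feasible points of \eqref{QP Tight Start}--\eqref{QP Tight End} to feasible points of \eqref{QP Tight Eq Start}--\eqref{QP Tight Eq End}. The lower-bound constraint transforms exactly as required: $p_r \geq \tau_N/|\mathcal{R}|$ is equivalent to $q_r \geq 0$. For the equality constraints, I would substitute $p_r = q_r + \tau_N/|\mathcal{R}|$ into $\sum_{r \in \mathcal{R}_{t,i}} p_r + s_{t,i} = \hat{\pi}_{t,i}$, giving $\sum_{r \in \mathcal{R}_{t,i}} q_r + \sum_{r \in \mathcal{R}_{t,i}} \tau_N/|\mathcal{R}| + s_{t,i} = \hat{\pi}_{t,i}$, which rearranges to precisely the shifted right-hand side $\hat{\pi}_{t,i} - \sum_{r \in \mathcal{R}_{t,i}} \tau_N/|\mathcal{R}|$ appearing in \eqref{QP Tight Eq Start}--\eqref{QP Tight Eq End}. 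Since the inverse map $q_r \mapsto q_r + \tau_N/|\mathcal{R}|$ reverses every step, the correspondence is a bijection between the two feasible regions.

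Finally I would note that the objective function $J_N = N \sum_{t}\sum_{i} s^2_{t,i}$ depends only on the slack variables $s_{t,i}$, which the substitution leaves untouched. Hence the objective value is identical for a feasible point and its image under the bijection. Because the feasible regions are in value-preserving bijection, the two problems attain the same optimal value and their minimizers correspond under the shift; this is the sense in which they are equivalent, and it lets us recover the tightened estimator $\hat{\eta}_{\tau_N}$ from the solution of the zero-lower-bound problem via $p_r = q_r + \tau_N/|\mathcal{R}|$.

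I do not expect a genuine obstacle here: the result is a change of variables and the only thing to be careful about is bookkeeping, namely confirming that the constant shift $\tau_N/|\mathcal{R}|$ is summed over exactly the set $\mathcal{R}_{t,i}$ on each constraint, matching the right-hand side in \eqref{QP Tight Eq Start}--\eqref{QP Tight Eq End}. The mild conceptual point worth stating explicitly is that the equivalence is genuinely an equivalence of optimization problems (same optimal value, corresponding optimizers) rather than a statement that the two problems are literally the same; the payoff is that \eqref{QP Tight Eq Start}--\eqref{QP Tight Eq End} has only nonnegativity constraints on the $p_r$ and is therefore amenable to the column generation algorithm of Section \ref{Sec:Column Generation}.
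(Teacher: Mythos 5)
Your proof is correct and follows essentially the same route as the paper: shift the type-probability variables by $\tau_N/|\mathcal{R}|$, check feasibility in both directions, and observe that the objective depends only on the unchanged $s_{t,i}$, so optimal values coincide. Your version is slightly more explicit about the constraint bookkeeping and the bijection, but there is no substantive difference.
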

\begin{proof}
 Given a feasible solution $(s_{t,i}, p_r)$ to (\ref{QP Tight Start})-(\ref{QP Tight End}), $(s_{t,i}, p'_r = p_r - \tau_N / |\mathcal{R}|)$ is a feasible solution to (\ref{QP Tight Eq Start})-(\ref{QP Tight Eq End}). Since both problems have the same objective function, and the $s_{t,i}$ variables have the same value in both feasible solutions, a solution to (\ref{QP Tight Start})-(\ref{QP Tight End}) implies the existence of a solution to (\ref{QP Tight Eq Start})-(\ref{QP Tight Eq End}) with the same objective value. Likewise, given a feasible solution $(s'_{t,i}, p'_r)$ to (\ref{QP Tight Eq Start})-(\ref{QP Tight Eq End}), $(s'_{t,i}, p_r = p'_r + \tau_N / |\mathcal{R}|)$ is a feasible solution to (\ref{QP Tight Start})-(\ref{QP Tight End}), again with the same objective value. Thus, the optimal solution to both problems will have the same value.
\end{proof}

Stating the equivalent problem still requires knowledge of all rational choice types to adjust the right hand side of the constraints. We therefore propose a tightening based on only a subset of the rational choice types.\\

Consider a subset of the rational choice types $\mathcal{R}' \subset \mathcal{R}$, such that for each hyperplane $h \in \mathcal{H}^\leq$, there exists at least one $r \in \mathcal{R}'$ such that $\sum_{t=1}^{T}\sum_{i=1}^{I_t} b_{h,t,i} a_{r,t,i} < 0$. The proof of Lemma 4.1 in KS can be applied to prove the following Lemma.

\begin{lemma}
Define
 \begin{align*}
\mathcal{C} = \{\sum_{r \in \mathcal{R}} \lambda_r a_r| \lambda_r \geq 0 ,~\forall r \in \mathcal{R}\}.
\end{align*}
and let
\begin{align*}
  \mathcal{C} = \left\{c \left\vert
  \begin{array}{l}
  \sum_{t}^{T} \sum_{i}^{I_t} b_{h,t,i}~ c_{t,i} \leq 0, \forall h \in \mathcal{H}^\leq \\
  \sum_{t}^{T} \sum_{i}^{I_t} b_{h,t,i}~ c_{t,i} = 0, \forall h \in \mathcal{H}^=
  \end{array}
  \right.
  \right\}.
\end{align*}
be its $H$-representation. For $\tau > 0$, define
\begin{align*}
\mathcal{C} = \left\{\sum_{r \in \mathcal{R}} \lambda_r a_r \left\vert
 \begin{array}{ll}
 \lambda_r \geq 0 ,&\forall r \in \mathcal{R}\backslash \mathcal{R}' \\
 \lambda_r \geq \tau/|\mathcal{R}'|, &\forall r \in \mathcal{R}'
 \end{array}
 \right.
 \right\}.
\end{align*}
Then one also has
\begin{align*}
  \mathcal{C} = \left\{c \left\vert
  \begin{array}{ll}
  \sum_{t}^{T} \sum_{i}^{I_t} b_{h,t,i}~ c_{t,i} \leq -\tau\phi_h,&\forall h \in \mathcal{H}^\leq \\
  \sum_{t}^{T} \sum_{i}^{I_t} b_{h,t,i}~ c_{t,i} = 0,&\forall h \in \mathcal{H}^=
  \end{array}
  \right.
  \right\}.
\end{align*}
with $\phi_h > 0$ for all $h \in \mathcal{H}^\leq$.
\end{lemma}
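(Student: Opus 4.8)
The plan is to show that the $\tau$-tightened set is simply a translate of the original cone $\mathcal{C}$, and then read off the claimed $H$-representation by substituting into the inequalities that already describe $\mathcal{C}$.

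First I would reparametrize the generating coefficients. For any $c$ in the tightened set, write $\lambda_r = \mu_r + \tau/|\mathcal{R}'|$ for $r \in \mathcal{R}'$ and $\lambda_r = \mu_r$ for $r \in \mathcal{R}\setminus\mathcal{R}'$, where in both cases $\mu_r \geq 0$. Setting $\bar{\mathbf{a}} = \frac{1}{|\mathcal{R}'|}\sum_{r \in \mathcal{R}'}\mathbf{a}_r$, this gives $c = \sum_{r\in\mathcal{R}}\mu_r \mathbf{a}_r + \tau \bar{\mathbf{a}}$. As the $\mu_r$ range over the nonnegative reals, the first summand ranges over all of $\mathcal{C}$, and both inclusions follow from this substitution, so the tightened set is exactly $\mathcal{C} + \tau\bar{\mathbf{a}}$. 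Equivalently, $c$ lies in the tightened set if and only if $c - \tau\bar{\mathbf{a}} \in \mathcal{C}$.

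Next I would apply the $H$-representation of $\mathcal{C}$ to $c - \tau\bar{\mathbf{a}}$. For each $h$, writing $\mathbf{b}_h\cdot c$ for $\sum_{t}\sum_{i} b_{h,t,i}c_{t,i}$, the membership condition $\mathbf{b}_h\cdot(c - \tau\bar{\mathbf{a}}) \leq 0$ becomes $\mathbf{b}_h\cdot c \leq \tau\,(\mathbf{b}_h\cdot\bar{\mathbf{a}})$, and similarly for the equalities. Defining $\phi_h = -\,\mathbf{b}_h\cdot\bar{\mathbf{a}}$ for $h\in\mathcal{H}^\leq$ turns this inequality into exactly $\mathbf{b}_h\cdot c \leq -\tau\phi_h$, which is the claimed form.

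The only real content beyond this bookkeeping, and the step I expect to be the crux, is verifying that $\phi_h > 0$ for every $h \in \mathcal{H}^\leq$ while the equalities remain unchanged. Since every individual type $\mathbf{a}_r$ lies in $\mathcal{C}$ (take that one coefficient equal to $1$ and the rest $0$), the $H$-representation forces $\mathbf{b}_h\cdot\mathbf{a}_r \leq 0$ for all $r$ and all $h\in\mathcal{H}^\leq$, hence $\mathbf{b}_h\cdot\bar{\mathbf{a}} \leq 0$. The defining property of $\mathcal{R}'$ supplies some $r^\ast\in\mathcal{R}'$ with $\mathbf{b}_h\cdot\mathbf{a}_{r^\ast} < 0$, so averaging a finite collection of nonpositive numbers one of which is strictly negative gives $\mathbf{b}_h\cdot\bar{\mathbf{a}} < 0$, i.e.\ $\phi_h > 0$. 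For $h\in\mathcal{H}^=$ the same membership argument yields $\mathbf{b}_h\cdot\mathbf{a}_r = 0$ for all $r$, so $\mathbf{b}_h\cdot\bar{\mathbf{a}} = 0$ and the equality $\mathbf{b}_h\cdot c = 0$ is preserved. This is precisely where the argument of KS's Lemma 4.1 is adapted: the subset condition imposed on $\mathcal{R}'$ is exactly what keeps each $\phi_h$ strictly positive even though only a subset of the types carries a positive lower bound.
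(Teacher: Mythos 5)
Your proof is correct. The paper gives no written proof of this lemma at all --- it only asserts that the result follows from the proof of Lemma 4.1 in KS --- and your argument (rewriting the tightened set as the translate $\mathcal{C} + \tau\bar{\mathbf{a}}$ with $\bar{\mathbf{a}} = \frac{1}{|\mathcal{R}'|}\sum_{r \in \mathcal{R}'}\mathbf{a}_r$, reading the new inequalities off the $H$-representation, and using the defining property of $\mathcal{R}'$ together with $\mathbf{b}_h \cdot \mathbf{a}_r \leq 0$ to get $\phi_h = -\mathbf{b}_h\cdot\bar{\mathbf{a}} > 0$ while the equality constraints stay unchanged) is precisely the adaptation of that KS argument which the paper invokes, here written out in full.
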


This result follows immediately from the proof of Lemma 4.1 in KS.\\

\subsection{Bounds}\label{Sec:Bounds}
Note that identifying the exact distribution of $J_N^{*(m)}(\tau_N)$ is unnecessary, as only the ratio of bootstrap test statistics larger and smaller than $J_N$ is necessary to check whether it falls above or below the critical value. This can be exploited by making use of bounds on the bootstrap test statistics $J_N^{*(m)}(\tau_N)$, to more quickly determine the $p$-value. Specifically, if at any point in the column generation algorithm we can determine, for a given bootstrap repetition, that $J_N^{*(m)}(\tau_N)$ is either
strictly larger or smaller than $J_N$, we terminate the algorithm. In this case we save the lower, or respectively the upper bound. This approach saves time, since bootstrap test statistics must not be computed exactly, but the resulting p-values do not change. \\

Calculating a upper bound on the bootstrap test statistic is straightforward. The objective value of the restricted master problem is immediately a upper bound on the objective value of the complete master problem, as any solution to the restricted master is also feasible for the complete master. Since a restricted master is already solved in every iteration of the column generation algorithm, no additional work is required to obtain this upper bound.\\

Lower bounds on the bootstrap test statistic can be obtained based on (optimal) solutions to the pricing problem as follows. Consider a pricing problem (\ref{con: Price Assign})-(\ref{con: Pricing End}) with objective function $\sum_{t = 1}^{T} \sum_{i = 1}^{I_t} s_{t,i}a_{t,i}$, and let the optimal solution value to this pricing problem be $z^*$. In this case, for each $r \in \mathcal{R}$, we have  $\sum_{t = 1}^{T} \sum_{i = 1}^{I_t} s_{t,i}a_{r,t,i} \leq z^*$. Since the cone $\mathcal{C}$ is the set of linear combinations of the vectors $a_r$, this in turn implies that
\begin{align}
\sum_{t = 1}^{T} \sum_{i = 1}^{I_t} s_{t,i}c_{t,i} \leq z^*, \forall c \in \mathcal{C}.
\end{align}
By solving the following, relatively simple, quadratic optimization problem we thus obtain a lower bound on the bootstrap test statistic.
\begin{align}\label{QP 2 Start}
  \text{Minimize} &  &\sum_{t = 1}^{T} \sum_{i = 1}^{I_t} v^2_{t,i} \\
  \text{Subject to} & & \nonumber \\
  & & c_{t,i} + v_{t,i} & = \hat{\pi}^{*(r)}_{t,i} & & \forall x_{t,i} \in \mathcal{X} \\
  & & \sum_{t = 1}^{T} \sum_{i = 1}^{I_t} s_{t,i}c_{t,i} & \leq z^* & \label{QP 2 End}
\end{align}
Note that obtaining this lower bound requires an optimal solution to the pricing problem. This leads to the following trade-off, where using heuristics leads to a more efficient column generation algorithm, with less time spent per iteration. On the other hand, solving the pricing problem to optimality allows us to obtain a lower bound on the test statistic, which may end the column generation algorithm outright. In our implementation, we will only compute the lower bound in iterations for which the pricing problem was solved exactly.

\section{Empirical Application} \label{Sec:Application}
For our empirical application, we use our improved algorithms to replicate the tests performed by \cite{deb2017revealed}. We first briefly summarize the model tested by these authors in Section \ref{sec:GAPP}. We focus on those aspects that are important to the computational problem of calculating the test statistic and critical value. To apply the general column generation approach described in previous sections to the setting of the application, some customization is required. This is described in Section \ref{sec:Pricing Problem}.  \\

\subsection{Generalized Axiom of Revealed Price Preference} \label{sec:GAPP}
Consider a dataset $\mathcal{D} = \{(\mathbf{p}_t,\mathbf{q}_t)\}_{t=1}^T$, with $\mathbf{q}_t \in \mathbb{R}^L_+$ a bundle of $L$ goods    bought at price vector $\mathbf{p}^t \in \mathbb{R}^L_{++}$. We are interested in the preference of the consumer over prices. Suppose, that for observations $t$ and $t'$, we have $\mathbf{p}_{t'}\mathbf{q}_t < \mathbf{p}_{t}\mathbf{q}_t$. In this case, the consumer would prefer prices $\mathbf{p}_{t'}$ over prices $\mathbf{p}_t$, since the former allows the consumer to purchase the same bundle of goods, and have (more) money left over to spend in other ways. Formally, we denote $\mathbf{p}_{t'}\mathbf{q}_t <(\leq) \mathbf{p}_{t}\mathbf{q}_t$ by $\mathbf{p}_{t'} \succ_p(\succeq_p) \mathbf{p}_t$. Furthermore, we denote the relation $\mathbf{p}_{t'} \succeq_p^* \mathbf{p}_t$ if there exists a chain of price vector such that $\mathbf{p}_{t'} \succeq_p \ldots \succeq_p \mathbf{p}_t$, and $\mathbf{p}_{t'} \succ_p^* \mathbf{p}_t$ if such a chain exists with at least one $\succ_p$ relation included.
\begin{definition}
The dataset $\mathcal{D} = \{(\mathbf{p}_t, \mathbf{q}_t)\}_{t=1}^T$ satisfies the Generalized Axiom of Revealed Price Preference (GAPP) if there do not exist two observations $t, t' \in T$, such that $\mathbf{p}_{t'} \succeq_p^* \mathbf{p}_t$ and $\mathbf{p}_{t} \succ_p^* \mathbf{p}_{t'}$
\end{definition}

Now consider a augmented utility function $u(\mathbf{q},-\mathbf{pq})$. Note that the utility depends both on the bundle of goods, as well as on the amount of money expended. DKSQ prove the following theorem.
\begin{theorem}
Given a dataset $\mathcal{D} = \{(\mathbf{p}_t,\mathbf{q}_t)\}_{t=1}^T$, the following are equivalent:
\begin{enumerate}
  \item $\mathcal{D}$ can be rationalized by an augmented utility function.
  \item $\mathcal{D}$ satisfies GAPP.
  \item $\mathcal{D}$ can be rationalized by an augmented utility function that is strictly increasing, continuous and concave. Moreover, $u$ is such that $\max_{\mathbf{q} \in \mathbb{R}^L_+} u(\mathbf{q},-\mathbf{pq})$ has a solution for all $\mathbf{p} \in \mathbb{R}^L_{++}$.
\end{enumerate}
\end{theorem}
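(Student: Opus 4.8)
The plan is to prove the cycle of equivalences $(1)\Rightarrow(2)\Rightarrow(3)\Rightarrow(1)$, since $(3)\Rightarrow(1)$ is immediate (a utility function with the stronger regularity properties in statement (3) is in particular an augmented utility function, so (3) trivially implies (1)). The substantive content lies in $(1)\Rightarrow(2)$ and in the construction needed for $(2)\Rightarrow(3)$. I would follow the classical Afriat-style strategy adapted to the price-preference setting, where the roles of quantities and prices are interchanged relative to the standard GARP argument.

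For $(1)\Rightarrow(2)$, I would argue by contraposition. Suppose $\mathcal{D}$ violates GAPP, so there exist observations with $\mathbf{p}_{t'} \succeq_p^* \mathbf{p}_t$ and $\mathbf{p}_{t} \succ_p^* \mathbf{p}_{t'}$. Unfolding the definitions of $\succeq_p$ and $\succ_p$ in terms of the expenditure inequalities $\mathbf{p}_{t'}\mathbf{q}_t \le \mathbf{p}_t\mathbf{q}_t$, and using that rationalization by an augmented utility function $u(\mathbf{q},-\mathbf{pq})$ forces $u(\mathbf{q}_t,-\mathbf{p}_t\mathbf{q}_t) \ge u(\mathbf{q}_{t'},-\mathbf{p}_t\mathbf{q}_{t'})$ whenever $\mathbf{p}_t\mathbf{q}_{t'} \le \mathbf{p}_t\mathbf{q}_t$, I would chain these inequalities along both revealed-preference paths. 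The strict relation in $\mathbf{p}_t \succ_p^* \mathbf{p}_{t'}$ produces a strict inequality somewhere in the chain, while the weak chain closes the loop, yielding $u(\mathbf{q}_t,-\mathbf{p}_t\mathbf{q}_t) > u(\mathbf{q}_t,-\mathbf{p}_t\mathbf{q}_t)$, a contradiction. The key bookkeeping step is to verify that a revealed price preference $\mathbf{p}_{t'}\succeq_p \mathbf{p}_t$ does translate into the correct utility inequality under the augmented-utility formulation, i.e. that spending the bundle $\mathbf{q}_t$ at the cheaper prices $\mathbf{p}_{t'}$ leaves the consumer weakly better off.

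For $(2)\Rightarrow(3)$, the plan is to construct an explicit augmented utility function with the stated regularity. I would first invoke an Afriat-type theorem: GAPP is equivalent to the existence of numbers (utility levels) $U_t$ and positive multipliers $\lambda_t$ satisfying a system of Afriat inequalities adapted to price preferences, of the form $U_{t'} \le U_t + \lambda_t(\mathbf{p}_t - \mathbf{p}_{t'})\mathbf{q}_t$ or the analogous sign convention. The solvability of this finite linear system is precisely what the no-cycle condition GAPP guarantees, and this is the step I expect to be the main obstacle, since it requires showing that the consistency of the expenditure inequalities (an acyclicity condition) is exactly the Farkas/Afriat solvability condition for the multiplier system. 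Once the Afriat numbers are in hand, I would define $u(\mathbf{q},-\mathbf{pq}) = \min_t \{U_t + \lambda_t(\mathbf{q}-\mathbf{q}_t, -(\mathbf{pq}-\mathbf{p}_t\mathbf{q}_t))\cdot(\text{gradient data})\}$ as a lower envelope of affine functions; being a minimum of finitely many affine functions, it is automatically continuous, concave, and (with the multiplier signs chosen correctly) strictly increasing in $\mathbf{q}$ and decreasing in expenditure. Finally I would check that this envelope rationalizes the data — that $\mathbf{q}_t$ solves $\max_{\mathbf{q}} u(\mathbf{q},-\mathbf{p}_t\mathbf{q})$ at each observation — and that concavity plus the coercive behavior coming from the $-\mathbf{pq}$ term guarantees existence of a maximizer for every $\mathbf{p} \in \mathbb{R}^L_{++}$, completing the proof.

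Since statements $(1)$ and $(3)$ differ only in the regularity imposed on $u$, the proof effectively shows that whenever the data are rationalizable at all they are rationalizable by a \emph{well-behaved} $u$; the concave-envelope construction is what upgrades mere rationalizability to the strong form in $(3)$.
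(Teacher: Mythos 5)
The first thing to note is that this paper contains no proof of the statement: it is an imported result, introduced with ``DKSQ prove the following theorem'' and used only to justify the discretization into patches and rational choice types. So there is no internal proof to compare your attempt against; the relevant benchmark is the proof in \cite{deb2017revealed}. Measured against that, your plan is in the right family (Afriat-style) but takes a different route. DKSQ do not re-derive Afriat-type inequalities for the price-preference setting. They observe that augmented utility maximization is formally standard utility maximization over $L+1$ goods, where the extra good is unspent money at price one: for a large enough constant $M$ one passes to the dataset $\{((\mathbf{p}_t,1),(\mathbf{q}_t,M-\mathbf{p}_t\mathbf{q}_t))\}_{t=1}^T$, checks that its revealed preference relation coincides with the revealed price preference relation of $\mathcal{D}$ (so GAPP for $\mathcal{D}$ is exactly GARP for the augmented data), and then invokes the standard Afriat theorem to obtain the continuous, strictly increasing, concave rationalization. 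That reduction delivers for free the step you yourself flag as ``the main obstacle'', namely that acyclicity implies solvability of the inequality system; your plan names this step but does not carry it out, so as written it is a roadmap rather than a proof.

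Two further points would need repair even as a roadmap. First, $(1)\Rightarrow(2)$ is simply false unless (strict) monotonicity, in particular in the money argument, is built into the definition of an augmented utility function: a constant $u$ rationalizes every dataset, GAPP-violating or not. Your chain of inequalities uses exactly this monotonicity (the ``leaves the consumer weakly better off'' step, and the strictness needed to turn $\succ_p^*$ into a strict utility inequality), so it must be stated as part of the hypothesis; note also that in this framework the maximization is over all of $\mathbb{R}^L_+$ with no budget constraint, so your side condition ``whenever $\mathbf{p}_t\mathbf{q}_{t'}\leq\mathbf{p}_t\mathbf{q}_t$'' is out of place --- the optimality inequality holds against every bundle. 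Second, in $(2)\Rightarrow(3)$ the clause that $\max_{\mathbf{q}\in\mathbb{R}^L_+}u(\mathbf{q},-\mathbf{p}\mathbf{q})$ has a solution for \emph{every} $\mathbf{p}\in\mathbb{R}^L_{++}$, not just the observed prices, is an existence claim on an unbounded domain; it does not follow from continuity and concavity alone, and has to be verified from the explicit envelope construction (the multipliers must force $u(\mathbf{q},-\mathbf{p}\mathbf{q})\to-\infty$ along rays in $\mathbb{R}^L_+$), which your plan only gestures at.
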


This model can be discretized, which is necessary to employ the algorithms discussed previously. For each period $t= 1,\ldots,T$, the set of possible choices $(\mathbb{R}^L_+)$ is partitioned into subspaces $x_{t,1}, \ldots, x_{t,I_t}$, we use the word ``patch" to refer to these elements. This partitioning is such that (1) $\mathbb{R}^L_+ = \bigcup_{i=1}^{I_t} x_{t,i}$ and (2) for all bundles $\mathbf{q}, \mathbf{q}' \in x_{t,i}$ and each other period $t'$,  $\mathbf{q}$ and $\mathbf{q}'$ induce the same revealed preference relations, (3) the partition is of minimal size. Analogous to KS and DKSQ, we only consider patches corresponding to strict price preference relations. $\mathcal{X}_t$ denotes the set of patches of periods $t$, while $\mathcal{X}$ is the set of all patches. Note that for each time periods, the number of patches, and thus possible choices we must account for, is now bounded from above by $2^T$.\\

Instead of all possible utility functions, we only consider rational choice types. We encode a choice type $r$ as $\mathbf{a}_r = (a_{r,1,1}, \ldots, a_{r,T,I_T})$, with $a_{r,t,i} = 1$ if the patch $x_{t,i}$ chosen at time $t$ by type $r$ and $a_{r,t,i} = 0$ otherwise. The set of rational choice types $\mathcal{R}$ is the set of all types $r$ for which the chosen patches induce price preference relations satisfying GAPP. We furthermore define the sets $\mathcal{R}_{t,i} := \{r \in\mathcal{R}| a_{r,t,i} = 1\}$. Given that there exists a finite number of patches, the number of rational choice types to be considered is also finite.

\subsection{Setting Specific Pricing Problem}\label{sec:Pricing Problem}
In the previous sections, we lay out how the column generation approach can be used to calculate the test statistic and how to handle the tightening procedure. This description is given in a general way, without any reference to a specific setting. In case of the master problem, this is not necessary. Formulation (\ref{QP Original Start})-(\ref{QP Original End}) can be used for any discrete choice setting. However, the set of rational choice types $\mathcal{R}$ is determined by the setting. The pricing problem (\ref{pricing problem}), must thus also be tailored to it. In this section, we formulate a pricing problem to test GAPP, and discuss ways to solve it efficiently.\\

The binary variable $\alpha_{t,i}$ indicates which patch is chosen on each budget. $\alpha_{t,i} = 1$ if patch $x_{t,i}$ is chosen from $\mathcal{X}_t$, and $\alpha_{t,i} = 0$ otherwise. The binary variables $\rho_{t,j}$ represent the preference relations between $\mathbf{p}_t$ and $\mathbf{p}_j$. If the patch chosen in $\mathcal{X}_t$ induces $\mathbf{p}_t \succ_p\mathbf{p}_j$, then $\rho_{t,j} = 1$, otherwise $\rho_{t,j} = 0$. $X_{t,i,j}$ is a parameter indicating the price preferences induced by the choice of patch $x_{t,i}$, with $X_{t,i,j} = 1$ if $x_{t,i}$ induces $\mathbf{p}_j \succ_p\mathbf{p}_t$ and $X_{t,i,j} = 0$ otherwise.

\begin{align}
  \text{Maximize} &  &\sum_{t = 1}^{T} \sum_{i = 1}^{I_t} s_{t,i}\alpha_{t,i} \label{Pricing Start}\\
  \text{Subject to} & & \nonumber \\
  & & \sum_{i=1}^{I_t} \alpha_{t,i} & = 1 & & \forall t = 1, \ldots, T \label{con: Price Assign}\\
  & & \sum_{i = 1}^{I_t}\alpha_{t,i}X_{t,i,j} - \rho_{j,t} & \leq 0& & \forall j,t = 1, \ldots, T \label{con: Price Direct}\\
  & & \rho_{j,t} + \rho_{t,k} - \rho_{j,k} & \leq 1 & & \forall k,j,t = 1, \ldots, T \label{con: Price Trans}\\
  & & \rho_{j,t} + \rho_{t,j} & \leq 1 & & \forall j,t = 1, \ldots, T \label{con: Price SARP}\\
  & & \rho_{j,t} & \in \{0,1\} & & \forall j,t,= 1, \ldots, T \label{con: Price Bin 1}\\
  & & \alpha_{t,i} & \in \{0,1\} & & \forall t = 1, \ldots, T,  i = 1, \ldots, I_t \label{con: Pricing End}
\end{align}

Constraint (\ref{con: Price Assign}) ensures exactly one patch is chosen on each budget. Constraints (\ref{con: Price Direct})-(\ref{con: Price SARP}) ensure GAPP is satisfied for the chosen patches.  First, Constraint (\ref{con: Price Direct}) ensures that if a chosen patch induces a price preference relation ($X_{t,i,j} = 1$), $\rho_{j,t}$ must also be set to one. Constraint (\ref{con: Price Trans}) makes sure that the $\rho$-variables also reflect the transitivity of the preference relations. Finally, Constraint (\ref{con: Price SARP}) enforces that the preference relation is acyclic. Together, these constraints enforce that the $a_{t,i}$ variables encode a valid choice pattern that is consistent with GAPP. An optimal solution to this integer program shows whether or not a rational choice type exists that can be added to the master problem, if so, the $a_{t,i}$ variables encode one such type.\\

As mentioned earlier, an optimal solution to the pricing problem is not necessary to advance the column generation algorithm. Any rational choice type for which $\bar{s} a_r \geq \bar{s}^* \bar{v}^*$ can be added to the restricted master problem to obtain a better solution. Since solving the pricing problem to optimality is often computationally costly, we propose to solve the pricing problem using heuristics, which are usually much faster. Only if we can not identify new choice types to add to the restricted master using the heuristic algorithms, will we use exact algorithms. Algorithm \ref{Alg: Pricing} shows how the heuristic and exact approaches work together. In the implementation, we use a {\it best insertion algorithm} \citep{LOP} adapted for this particular problem. A detailed description of the implemented heuristic can be found in Appendix B.\\

\begin{algorithm} \caption{Solving the Pricing Problem}
\label{Alg: Pricing}
\begin{algorithmic}[1]
\STATE Solve the pricing problem using heuristical algorithms.
\IF {The best solution has a value $< \bar{s}^*\bar{v}^*$}
\STATE Solve the pricing problem using exact algorithms.
\ENDIF
\end{algorithmic}
\end{algorithm}

The tightening procedure requires a subset of the rational choice types to be identified a priori. This set $\mathcal{R}'$ can be generated by randomly drawing choice types, testing whether they are rational and then keeping only rational choice types so generated. If the probability that a randomly chosen choice type is rational is low, this approach can be time consuming. To speed up the process, we opted for a semi-random method. In this method, we first randomly generate choice types. Small changes are then made to these choice types to remove violations of rationality. In the application, we set the size of the subset to 1,000 rational choice types. A detailed description of the procedure can be found in Appendix C.

\subsection{Results} \label{sec:Results}
The column generation algorithm described in the previous sections is implemented in C++, and CPLEX 12.8 is used to solve both the quadratic master problem, as well as the exact pricing problems. Computational experiments were run on a computer with a quad-core 2.6 GHz processor and 16Gb RAM. For the first bootstrap iteration, we initialize the set $\bar{\mathcal{R}}$ as an empty set. At the end of each bootstrap iteration, the set $\bar{\mathcal{R}}$ is saved and used as the starting set for the next bootstrap iteration. This approach generally speeds up computation, as good solutions for different bootstrap iterations usually have rational choice types in common, which do not need to be re-generated using these starting sets.\footnote{The set $\bar{\mathcal{R}}$ can become large over time, slowing down computation. If this is the case, it can be beneficial to record how often variables are used in the optimal solution and to periodically remove rarely used variables.}\\

In this section, we discuss the speed-ups that are achieved through the use of various techniques discussed above. Specifically, we iteratively compare the following configurations:
\begin{enumerate}
\item All pricing problems solved exactly, no use of bounds.
\item Heuristic \& exact algorithms for the pricing problem, no use of bounds.
\item Heuristic \& exact algorithms for the pricing problem, Upper Bound used.
\item Heuristic \& exact algorithms for the pricing problem, Upper \& Lower Bound used.
\end{enumerate}

These algorithms are applied to the U.K. Family Expenditure Survey. Table \ref{Table:3Goods} contains the minimum, maximum and average computation time for these configurations over the different instances for a given number of periods. Computation times were capped at 1 hour (3600 seconds) for each instance.

\begin{table}[ht!]
\centering
\begin{tabular}{l|r|r|r|r|r|r|r|r|r|r|r|r|}
&\multicolumn{3}{c}{Exact}&\multicolumn{3}{|c}{Heur. - No Bounds}&\multicolumn{3}{|c}{Heur.- UB}&\multicolumn {3}{|c|}{Heur.- All Bounds} \\
&{Min}&{Avg}&{Max}&{Min}&{Avg}&{Max}&{Min}&{Avg}&{Max}&{Min}&{Avg}&{Max}\\ \hline
6 Periods& 1 & 7 & 13 & 1 & 6 & 11 & 2 & 4 & 10 & 1 & 4 & 9 \\
10 Periods& 11 & 103 & 335 & 9 & 42 & 118 & 5 & 29 & 82 & 4 & 27 & 76 \\
15 Periods& 249 & NA & $>$ 3600 & 81 & 1372 & 3557 & 44 & 643 & 1559 & 42 & 565 & 1327 \\
\end{tabular}
\caption{Minimum, Maximum and Average computation times for 3 goods.}
\label{Table:3Goods}
\end{table}

AS DSKQ report that their current techniques do not allow the testing of more than 8 periods, it is clear that even in the simplest configuration, the column generation algorithm allows the testing of much larger datasets than are possible using the approach by KS and DKSQ. Table \ref{Table:3Goods} furthermore show the large impact the use of heuristics for the pricing problem and the use of bounds has on total computation time. While the influence is limited for the smaller instances, the addition of heuristics lowers average computation time by almost 60\% for the 10 periods instances. For 15 periods, this decrease is nearly 75\% for the instances which finished in both configurations. Likewise, the use of bounds to terminate computation earlier speeds up computation considerably, with a decrease in computation time of 35\% for 10 periods and nearly 60\% for 15 periods. Most of this speed-up is due to the lower bound, though for the 15 periods instances the addition of upper bounds lowered computation times by an additional 12\%. While no instances of 20 periods finished within 1 hour, 142 bootstrap iterations were finished for the hardest instance, suggesting all instances could be finished within about 7 hours. For the full 25 periods dataset, 4.25 hours were necessary to complete 100 bootstrap iterations. \\

In the instances we tested, increasing the number of goods generally increased the computational difficulty of the problem. A higher number of goods led to higher numbers of patches, which in turn increased the number of (rational) choice types. Table \ref{table:rationalchoicetypes} clearly shows this. The increased difficulty is also clearly noticeable in the computation times. Whereas the 10 period instances for 3 goods are solved in 27 seconds on average, this was a lower bound for the 4 good, 10 periods instances. For 4 goods, 1 out of 16 instances was not finished within 1 hour, the other 15 took less than 8 minutes on average. 5 Good instances have slightly higher, but comparable, computation times. Due to the higher difficulty of the 4 and 5 good instances, larger instances still take significant amounts of time. For 15 periods, only 6 bootstrap repetitions were complete for the hardest instance, implying about 6 days of total computation time for 1000 bootstraps.

\section{Discussion}
In this paper, we have shown that while the approach to testing random utility models developed by KS and DKSQ is computationally challenging, advanced algorithms allow for tests of far larger datasets. A main ingredient is to avoid complete enumeration of rational choice types, but generate these as necessary. Applying these algorithms to a model of consumption developed by DKSQ and empirical data from the U.K., we show that the model is supported by the data even over longer periods of time. 

\bibliographystyle{plainnat}
\bibliography{BibRP}

\newpage
\section*{Appendix A: Heuristic Pricing Algorithm}
For the pricing problem we use a \emph{Best Insertion} heuristic to quickly generate good rational choice types to add to the restricted master problem. The Best Insertion Algorithm iteratively creates an ordering of the time periods, which (can) correspond to a rational choice type. First, we explain the link between orderings of the timer periods and rational choice types. Next, we explain how to build an ordering that provides a good solution to the pricing problem. Algorithm \ref{Alg: Best Insert} provides the pseudo-code for the algorithm. \\

Consider an ordering $O_T$ over all $T$ time periods $t \in \mathcal{T}$. We can associate a rational choice type with this ordering if the patch chosen in a lower ranked time period is not preferred over one chosen in a higher ranked time period. More specifically, let $o_T(t)$ be the position of time period $t$ in the ordering. We can associate a rational choice type with this ordering if for each time period $t$ there exists a patch $x_{t,i}$, which lies above all budget planes $\mathcal{B}_{t'}$ with $o_T(t) < o_T(t') \leq T$ (i.e. $X_{t,i,t'} = 1)$. Notice that in this case, there exists a feasible solution to the pricing problem for which $\rho_{q_j,q_{j'}} = 1$ only if $j \leq j'$. Given the objective function of the pricing problem, we can easily find the objective value of the best rational choice types respecting the ordering of time periods using the following function.

\begin{align}
V(O_T) = \sum_{t = 1}^{T} \max_{(i: X_{t,i,t'} = 1, \forall t' \text{ for which } o_T(t) < o_T(t'))} s_{t,i}. \label{Eq:Order Eval}
\end{align}

With $s_{t,i}$ the value of choosing patch $x_{t,i}$ in the pricing problem.\\

Building an ordering is done in an iterative fashion. Consider an ordering $O_m$ of $m$ time periods in the set $\mathcal{T'} \subset \mathcal{T}$. We now wish to expand this ordering by inserting an additional time period $t \notin \mathcal{T'}$. The ordering $O^j_{m}$ is an ordering of $m+1$ elements, created by inserting alternative $t$ in the $j^{th}$ position in the ordering $O_m$. More precisely, all time periods in positions $j$ to $m$ in the ordering $O_m$ are placed one position further back, and time period $t$ is placed in the $j^{th}$ position. The value of the best (partial) rational choice type consistent with $O^j_{m}$ can be evaluated using (\ref{Eq:Order Eval}), if one exists. In this fashion, the best insertion position can be identified and the resulting ordering is fixed. This process is repeated until all time periods have been added to the ordering. \\

In the implementation, we add a dummy patch $x_{t,I_t + 1}$ for each $t \in \mathcal{T}$, with $X_{t,I_t + 1, t'} = 1$ for all $t' \in \mathcal{T}$ and $s_{t,I_t + 1}$ an arbitrarily low (negative) number. In this way, the value $V(O_m)$ is always defined, and negative value indicates that there does not exist a consistent rational choice type.

\begin{algorithm} \caption{Best Insertion Algorithm.}
\label{Alg: Best Insert}
\begin{algorithmic}[1]
\STATE Choose $t \in \mathcal{T}$.
\STATE Create order $O_1$ and set $o_1(t) := 1$.
\STATE Set $\mathcal{T}' := \{t\}$, $k := 1$.
\WHILE{$\mathcal{T}' \ne \mathcal{T}$}
\STATE Choose $t \in \mathcal{T} \backslash \mathcal{T}'$.
\STATE For each $j = 1,\ldots,k$, compute $V(O_k^{j})$.
\STATE Let $r := \arg \max_{j = 1,\ldots,k} V(O_k^{j})$.
\STATE Set $O_{k + 1} := O_{k}^r$.
\STATE Set $\mathcal{T}' := \mathcal{T}' \cup \{t\}$.
\STATE Set $k := k + 1$.
\ENDWHILE
\end{algorithmic}
\end{algorithm}

A final implementation note, is that the time period to be inserted in the partial order can be chosen freely. Different choices in the order in which time periods are inserted can lead to different orderings. In the implementation, we randomly generated the orders in which the periods are inserted. For each pricing iteration we ran the algorithm 10 times with different insertion orders. From these 10 runs of the best insertion algorithm, only the best solution to the pricing problem is kept.
\newpage

\section*{Appendix B: Generation of Choice Types for Tightening}
To tighten the cone based on a subset of the rational choice types, we generate the subset in a semi-random way. First, we generate (likely irrational) choice types by randomly choosing one patch on each budget. If this choice type is rational, we add it to the subset for tightening. If it is not, we identify the subsets of budgets for which preference cycles exist. For each such subset, we randomly pick one budget. For that budget, we look for a patch which, (i) removes at least one preference relations within the subset (ii) is as close as possible to the currently selected patch on that budget (ii) removes, rather than adds revealed preference relations. In this way, we slightly change the choice type, while increasing the chance that it is a rational choice type. If after these changes the choice type is not yet rational, the procedure is repeated until a rational choice type is found. Algorithm \ref{Alg: Gen of Rat Choice Type} contains the pseudo-code to generate these rational choice types in a semi-random way.\\

We define $\mathcal{T}$ as the set of all time periods.

\begin{algorithm} \caption{Generation of rational choice types.}
\label{Alg: Gen of Rat Choice Type}
\begin{algorithmic}[1]
\STATE Randomly generate a choice type $a$ with $\sum_{i = 1}^{I_t} a_{t,i} = 1$.
\WHILE {$a \notin \mathcal{R}$}
\STATE Identify revealed preference relations $r_{i,j},~\forall i,j = 1,\ldots,T$.
\STATE Identify a partitioning $\mathcal{T}_1, \ldots, \mathcal{T}_m$ with $\bigcup_{i=1}^{m} \mathcal{T}_i = \mathcal{T}$ and $\mathcal{T}_i \cap \mathcal{T}_j = \varnothing$ for all $i \ne j$.
\FORALL{$\mathcal{T}_k$ with $|\mathcal{T}_k| > 1$}
\STATE Randomly choose $t \in \mathcal{T}_k$, with $x_{t,z}$ the currently chosen patch on $\mathcal{B}_t$.
\FORALL{$x_{t,i}$, $i = 1, \ldots, I_t$}
\IF{$X_{t,i,t'} \leq X_{t,z,t'}$ for all $t' \in \mathcal{T}_i$}
\STATE $Score_i := 999$.
\ENDIF
\FORALL{$t' \in \mathcal{T}$}
\IF{$X_{t,z,t'} = -1$ and $X_{t,i,t'} = 1$}
\STATE $Score_i := Score_i + 1$.
\ELSIF{$X_{t,z,t'} = 1$ and $X_{t,i,t'} = -1$}
\STATE $Score_i := Score_i + 5$.
\ENDIF
\ENDFOR
\STATE Find a patch $x_{t,j}$ with $j \in \arg \min_{i = 1,\ldots,I_t} Score_i$.
\STATE Set $a_{t,z} := 0$ and $a_{t,j} := 1$.
\ENDFOR
\ENDFOR
\ENDWHILE
\end{algorithmic}
\end{algorithm}
\newpage
\section*{Appendix C: Results Tables}
\begin{table}[htbp]
  \centering
    \begin{tabular}{rrrr|rr|rr|rr|rr|}
          &       &       &       & \multicolumn{2}{c|}{Exact - No Bounds} & \multicolumn{2}{c|}{Heur. - No Bounds} & \multicolumn{2}{c|}{Heur. - UB} & \multicolumn{2}{c|}{Heur. -All Bounds} \\
    \hline
    \multicolumn{2}{c}{Periods} & \multicolumn{1}{l}{Jstat} & \multicolumn{1}{l|}{Pval} & \multicolumn{1}{l}{Time} & \multicolumn{1}{l|}{Completed} & \multicolumn{1}{l}{Time} & \multicolumn{1}{l|}{Completed} & \multicolumn{1}{l}{Time} & \multicolumn{1}{l|}{Completed} & \multicolumn{1}{l}{Time} & \multicolumn{1}{l|}{Completed} \\
    75    & 80    & 0.34  & 0.03  & 7.4   & 1000  & 6.6   & 1000  & 1.9   & 1000  & 2.1   & 1000 \\
    76    & 81    & 0.92  & 0.25  & 8.5   & 1000  & 6.4   & 1000  & 3.1   & 1000  & 3.0   & 1000 \\
    77    & 82    & 0.90  & 0.51  & 7.4   & 1000  & 6.2   & 1000  & 4.0   & 1000  & 3.9   & 1000 \\
    78    & 83    & 0.52  & 0.53  & 9.5   & 1000  & 6.2   & 1000  & 3.9   & 1000  & 3.9   & 1000 \\
    79    & 84    & 0.02  & 0.985 & 11.5  & 1000  & 7.6   & 1000  & 7.7   & 1000  & 7.6   & 1000 \\
    80    & 85    & 0.08  & 0.71  & 7.2   & 1000  & 6.1   & 1000  & 4.9   & 1000  & 4.7   & 1000 \\
    81    & 86    & 0.09  & 0.83  & 9.5   & 1000  & 8.1   & 1000  & 6.8   & 1000  & 6.6   & 1000 \\
    82    & 87    & 0.10  & 0.89  & 12.3  & 1000  & 10.1  & 1000  & 9.3   & 1000  & 9.0   & 1000 \\
    83    & 88    & 0.48  & 0.68  & 7.6   & 1000  & 7.1   & 1000  & 5.5   & 1000  & 5.3   & 1000 \\
    84    & 89    & 0.56  & 0.44  & 4.7   & 1000  & 4.8   & 1000  & 3.0   & 1000  & 2.8   & 1000 \\
    85    & 90    & 0.03  & 0.70  & 2.3   & 1000  & 2.6   & 1000  & 2.2   & 1000  & 2.2   & 1000 \\
    86    & 91    & 1.42  & 0.27  & 3.4   & 1000  & 3.4   & 1000  & 2.0   & 1000  & 2.0   & 1000 \\
    87    & 92    & 2.94  & 0.17  & 4.5   & 1000  & 4.7   & 1000  & 2.2   & 1000  & 2.1   & 1000 \\
    88    & 93    & 1.51  & 0.21  & 2.6   & 1000  & 2.7   & 1000  & 1.8   & 1000  & 1.8   & 1000 \\
    89    & 94    & 1.72  & 0.20  & 1.8   & 1000  & 1.9   & 1000  & 1.5   & 1000  & 1.5   & 1000 \\
    90    & 95    & 0.00  & 1.00  & 1.5   & 1000  & 1.5   & 1000  & 2.5   & 1000  & 2.5   & 1000 \\
    91    & 96    & 0.31  & 0.51  & 3.6   & 1000  & 3.7   & 1000  & 2.7   & 1000  & 2.7   & 1000 \\
    92    & 97    & 0.67  & 0.45  & 5.9   & 1000  & 5.0   & 1000  & 3.0   & 1000  & 2.9   & 1000 \\
    93    & 98    & 0.38  & 0.52  & 9.6   & 1000  & 7.1   & 1000  & 4.8   & 1000  & 4.7   & 1000 \\
    94    & 99    & 0.26  & 0.81  & 13.0  & 1000  & 10.7  & 1000  & 9.6   & 1000  & 9.3   & 1000 \\
    \end{tabular}
    \caption{Computational Results for 6 periods, 3 goods.}
  \label{tab:3-6}%
\end{table}%

\begin{table}[htbp]
  \centering
    \begin{tabular}{rrrr|rr|rr|rr|rr|}
          &       &       &       & \multicolumn{2}{c|}{Exact - No Bounds} & \multicolumn{2}{c|}{Heur. - No Bounds} & \multicolumn{2}{c|}{Heur. - UB} & \multicolumn{2}{c|}{Heur. -All Bounds} \\
    \hline
    \multicolumn{2}{c}{Periods} & \multicolumn{1}{l}{Jstat} & \multicolumn{1}{l|}{Pval} & \multicolumn{1}{l}{Time} & \multicolumn{1}{l|}{Completed} & \multicolumn{1}{l}{Time} & \multicolumn{1}{l|}{Completed} & \multicolumn{1}{l}{Time} & \multicolumn{1}{l|}{Completed} & \multicolumn{1}{l}{Time} & \multicolumn{1}{l|}{Completed} \\
    75    & 84    & 1.79  & 0.59  & 230   & 1000  & 72    & 1000  & 52    & 1000  & 50    & 1000 \\
    76    & 85    & 2.18  & 0.74  & 209   & 1000  & 97    & 1000  & 82    & 1000  & 74    & 1000 \\
    77    & 86    & 1.66  & 0.81  & 207   & 1000  & 87    & 1000  & 79    & 1000  & 76    & 1000 \\
    78    & 87    & 1.79  & 0.66  & 335   & 1000  & 118   & 1000  & 81    & 1000  & 72    & 1000 \\
    79    & 88    & 0.68  & 0.73  & 275   & 1000  & 88    & 1000  & 62    & 1000  & 57    & 1000 \\
    80    & 89    & 4.00  & 0.26  & 76    & 1000  & 17    & 1000  & 15    & 1000  & 14    & 1000 \\
    81    & 90    & 5.38  & 0.37  & 30    & 1000  & 21    & 1000  & 10    & 1000  & 10    & 1000 \\
    82    & 91    & 5.81  & 0.34  & 37    & 1000  & 26    & 1000  & 13    & 1000  & 13    & 1000 \\
    83    & 92    & 3.87  & 0.48  & 78    & 1000  & 38    & 1000  & 20    & 1000  & 20    & 1000 \\
    84    & 93    & 3.88  & 0.38  & 77    & 1000  & 29    & 1000  & 15    & 1000  & 15    & 1000 \\
    85    & 94    & 3.78  & 0.29  & 27    & 1000  & 19    & 1000  & 7     & 1000  & 6     & 1000 \\
    86    & 95    & 3.54  & 0.24  & 19    & 1000  & 14    & 1000  & 5     & 1000  & 5     & 1000 \\
    87    & 96    & 4.99  & 0.25  & 11    & 1000  & 9     & 1000  & 5     & 1000  & 4     & 1000 \\
    88    & 97    & 3.53  & 0.32  & 16    & 1000  & 12    & 1000  & 5     & 1000  & 5     & 1000 \\
    89    & 98    & 3.92  & 0.35  & 12    & 1000  & 10    & 1000  & 5     & 1000  & 5     & 1000 \\
    90    & 99    & 1.12  & 0.65  & 17    & 1000  & 13    & 1000  & 9     & 1000  & 9     & 1000 \\
    \end{tabular}%
  \label{tab:3-10}%
  \caption{Computational Results for 10 periods, 3 goods.}
\end{table}%

\begin{table}[htbp]
  \centering
    \begin{tabular}{rrrr|rr|rr|rr|rr|}
          &       &       &       & \multicolumn{2}{c|}{Exact - No Bounds} & \multicolumn{2}{c|}{Heur. - No Bounds} & \multicolumn{2}{c|}{Heur. - UB} & \multicolumn{2}{c|}{Heur. -All Bounds} \\
    \midrule
    \multicolumn{2}{c}{Periods} & \multicolumn{1}{l}{Jstat} & \multicolumn{1}{l|}{Pval} & \multicolumn{1}{l}{Time} & \multicolumn{1}{l|}{Completed} & \multicolumn{1}{l}{Time} & \multicolumn{1}{l|}{Completed} & \multicolumn{1}{l}{Time} & \multicolumn{1}{l|}{Completed} & \multicolumn{1}{l}{Time} & \multicolumn{1}{l|}{Completed} \\
    75    & 89    & 7.17  & 0.39  &       & 21    & 3557  & 1000  & 1462  & 1000  & 1277  & 1000 \\
    76    & 90    & 8.25  & 0.59  &       & 101   & 2120  & 1000  & 1559  & 1000  & 1327  & 1000 \\
    77    & 91    & 9.90  & 0.49  &       & 214   & 1216  & 1000  & 682   & 1000  & 545   & 1000 \\
    78    & 92    & 11.60 & 0.36  &       & 131   & 1985  & 1000  & 739   & 1000  & 665   & 1000 \\
    79    & 93    & 8.05  & 0.46  &       & 147   & 2890  & 1000  & 1230  & 1000  & 1044  & 1000 \\
    80    & 94    & 8.75  & 0.40  &       & 483   & 1336  & 1000  & 576   & 1000  & 590   & 1000 \\
    81    & 95    & 10.03 & 0.29  & 2906  & 1000  & 785   & 1000  & 229   & 1000  & 243   & 1000 \\
    82    & 96    & 10.09 & 0.29  & 2220  & 1000  & 561   & 1000  & 194   & 1000  & 175   & 1000 \\
    83    & 97    & 6.86  & 0.53  & 1312  & 1000  & 347   & 1000  & 218   & 1000  & 177   & 1000 \\
    84    & 98    & 7.00  & 0.50  & 897   & 1000  & 215   & 1000  & 135   & 1000  & 132   & 1000 \\
    85    & 99    & 6.81  & 0.43  & 249   & 1000  & 81    & 1000  & 44    & 1000  & 42    & 1000 \\
    \end{tabular}%
  \label{tab:3-15}%
  \caption{Computational Results for 15 periods, 3 goods.}
\end{table}%

\begin{table}[htbp]
  \centering
    \begin{tabular}{rrrr|rr|rr|rr|rr|}
          &       &       &       & \multicolumn{2}{c|}{Exact - No Bounds} & \multicolumn{2}{c|}{Heur. - No Bounds} & \multicolumn{2}{c|}{Heur. - UB} & \multicolumn{2}{c|}{Heur. -All Bounds} \\
    \midrule
    \multicolumn{2}{c}{Periods} & \multicolumn{1}{l}{Jstat} & \multicolumn{1}{l|}{Pval} & \multicolumn{1}{l}{Time} & \multicolumn{1}{l|}{Completed} & \multicolumn{1}{l}{Time} & \multicolumn{1}{l|}{Completed} & \multicolumn{1}{l}{Time} & \multicolumn{1}{l|}{Completed} & \multicolumn{1}{l}{Time} & \multicolumn{1}{l|}{Completed} \\
    75    & 94    & 12.57 &       &       & 1     &       & 34    &       & 91    &       & 142 \\
    76    & 95    & 12.73 &       &       & 2     &       & 67    &       & 154   &       & 170 \\
    77    & 96    & 13.84 &       &       & 4     &       & 134   &       & 316   &       & 404 \\
    78    & 97    & 14.76 &       &       & 17    &       & 162   &       & 422   &       & 462 \\
    79    & 98    & 12.06 &       &       & 22    &       & 195   &       & 410   &       & 433 \\
    80    & 99    & 12.93 &       &       & 84    &       & 430   &       & 791   &       & 867 \\
    \end{tabular}%
  \label{tab:3-20}%
  \caption{Computational Results for 20 periods, 3 goods.}
\end{table}%

\newpage
\begin{table}[h!]
  \centering
    \begin{tabular}{rrrrr|rr}
    \multicolumn{1}{l}{\# Goods} & \multicolumn{1}{l}{\# Periods} & \multicolumn{2}{c}{Periods} & \multicolumn{1}{l|}{Pval} & \multicolumn{1}{l}{Computation Time} & \multicolumn{1}{l}{Completed} \\
    \midrule
    4     & 10    & 75    & 84    & 0.627 & 430   & 1000 \\
    4     & 10    & 76    & 85    & 0.772 & 406   & 1000 \\
    4     & 10    & 77    & 86    & 0.676 & 717   & 1000 \\
    4     & 10    & 78    & 87    & 0.985 & 2957  & 1000 \\
    4     & 10    & 79    & 88    &       &       & 529 \\
    4     & 10    & 80    & 89    & 0.361 & 353   & 1000 \\
    4     & 10    & 81    & 90    & 0.546 & 91    & 1000 \\
    4     & 10    & 82    & 91    & 0.706 & 125   & 1000 \\
    4     & 10    & 83    & 92    & 0.989 & 172   & 1000 \\
    4     & 10    & 84    & 93    & 0.98  & 91    & 1000 \\
    4     & 10    & 85    & 94    & 0.946 & 71    & 1000 \\
    4     & 10    & 86    & 95    & 0.799 & 207   & 1000 \\
    4     & 10    & 87    & 96    & 0.783 & 438   & 1000 \\
    4     & 10    & 88    & 97    & 0.809 & 666   & 1000 \\
    4     & 10    & 89    & 98    & 0.817 & 355   & 1000 \\
    4     & 10    & 90    & 99    & 0.794 & 425   & 1000 \\
    5     & 10    & 75    & 84    & 0.353 & 324   & 1000 \\
    5     & 10    & 76    & 85    & 0.687 & 611   & 1000 \\
    5     & 10    & 77    & 86    & 0.531 & 1158  & 1000 \\
    5     & 10    & 78    & 87    &       &       & 479 \\
    5     & 10    & 79    & 88    &       &       & 150 \\
    5     & 10    & 80    & 89    & 0.811 & 1477  & 1000 \\
    5     & 10    & 81    & 90    & 0.89  & 253   & 1000 \\
    5     & 10    & 82    & 91    & 0.93  & 223   & 1000 \\
    5     & 10    & 83    & 92    & 1     & 255   & 1000 \\
    5     & 10    & 84    & 93    & 0.999 & 147   & 1000 \\
    5     & 10    & 85    & 94    & 0.998 & 80    & 1000 \\
    5     & 10    & 86    & 95    & 0.778 & 380   & 1000 \\
    5     & 10    & 87    & 96    & 0.803 & 903   & 1000 \\
    5     & 10    & 88    & 97    & 0.864 & 1053  & 1000 \\
    5     & 10    & 89    & 98    & 0.894 & 648   & 1000 \\
    5     & 10    & 90    & 99    & 0.914 & 636   & 1000 \\
    4     & 15    & 75    & 89    &       &       & 10 \\
    4     & 15    & 76    & 90    &       &       & 40 \\
    4     & 15    & 77    & 91    &       &       & 59 \\
    4     & 15    & 78    & 92    &       &       & 35 \\
    4     & 15    & 79    & 93    &       &       & 17 \\
    4     & 15    & 80    & 94    &       &       & 27 \\
    4     & 15    & 81    & 95    &       &       & 29 \\
    4     & 15    & 82    & 96    &       &       & 23 \\
    4     & 15    & 83    & 97    &       &       & 31 \\
    4     & 15    & 84    & 98    &       &       & 39 \\
    4     & 15    & 85    & 99    &       &       & 44 \\
    5     & 15    & 75    & 89    &       &       & 9 \\
    5     & 15    & 76    & 90    &       &       & 13 \\
    5     & 15    & 77    & 91    &       &       & 13 \\
    5     & 15    & 78    & 92    &       &       & 9 \\
    5     & 15    & 79    & 93    &       &       & 6 \\
    5     & 15    & 80    & 94    &       &       & 12 \\
    5     & 15    & 81    & 95    &       &       & 13 \\
    5     & 15    & 82    & 96    &       &       & 9 \\
    5     & 15    & 83    & 97    &       &       & 17 \\
    5     & 15    & 84    & 98    &       &       & 22 \\
    5     & 15    & 85    & 99    &       &       & 17 \\
    \end{tabular}%
  \label{tab:4&5}%
  \caption{Computation times for 4 and 5 goods.}
\end{table}%


\end{document}